\newcommand{\beq}{\begin{equation}}
\newcommand{\eeq}{\end{equation}}
\newcommand{\beqn}{\begin{eqnarray}}
\newcommand{\eeqn}{\end{eqnarray}}
\newtheorem{theorem}{\textbf{Theorem}}
\newtheorem{lem}{\textbf{Lemma}}
\newcommand{\eref}[1]{(\ref{#1})}
\newcommand{\sref}[1]{Section~\ref{#1}}
\newcommand{\ignore}[1]{}
\newcommand{\thref}[1]{Theorem~\ref{#1}}
\long\def\symbolfootnote[#1]#2{\begingroup%
\def\thefootnote{\fnsymbol{footnote}}\footnote[#1]{#2}\endgroup}
\newcommand*\colvec[1]{
\global\colveccount#1
\begin{pmatrix}
\colvecnext
}
\def\t#1{
#1
\global\advance\colveccount-1
\ifnum\colveccount>0
\\
\expandafter\colvecnext
\else
\end{pmatrix}
\fi
}
\newcolumntype{P}[1]{>{\centering\arraybackslash}p{#1}}
\begin{document}

\title{Throughput Maximization of Network-Coded and Multi-Level Cache-Enabled Heterogeneous Network}

\author{Mohammed S. Al-Abiad, Student Member, IEEE, Md. Zoheb Hassan, Student Member, IEEE, and Md. Jahangir Hossain, Senior Member, IEEE


\thanks{Mohammed S. Al-Abiad, Md. Zoheb Hassan, and Md. Jahangir Hossain are with the School of
	Engineering, University of British Columbia, Kelowna, BC V1V 1V7, Canada
	(e-mail: m.saif@alumni.ubc.ca, zohassan@mail.ubc.ca, jahangir.hossain@ubc.ca).
	
}
\vspace{-.7cm}}
\maketitle

\begin{abstract}
One of the paramount advantages of multi-level cache-enabled (MLCE) networks is pushing contents proximity to the network edge and proactively caching them at multiple transmitters (i.e., small base-stations (SBSs), unmanned
aerial vehicles (UAVs), and cache-enabled device-to-device (CE-D2D) users). As such, the fronthaul congestion  between a core network and a large number of
transmitters is alleviated. For this objective, we exploit network coding (NC) to schedule a set of users to the same transmitter. Focusing on this, we consider the throughput maximization problem that  optimizes jointly   the network-coded user scheduling and  power allocation, subject to fronthaul capacity, transmit
power, and NC constraints. Given the intractability
of the problem, we decouple it into two separate subproblems. In the first subproblem, we consider the network-coded user scheduling problem for the given power allocation, while in the second subproblem, we use the NC resulting user schedule to optimize the power levels. We design an innovative \textit{two-layered rate-aware NC (RA-IDNC)} graph to solve the first subproblem  and evaluate the second subproblem using an iterative function
evaluation (IFE) approach. Simulation results are presented to depict
the  throughput gain of the proposed approach over the existing solutions.
\end{abstract}

\begin{IEEEkeywords}
Cache-enabled networks, file streaming, multi-level caching, network coding,  power allocation.
\vspace{-.4cm}
\end{IEEEkeywords}

\IEEEpeerreviewmaketitle

\section{Introduction}

\IEEEPARstart{T}{he} fifth generation (5G) wireless communication is expected to alleviate the data congestion at the core network and improve system throughput. In fact, 5G and beyond will be revolutionary in advancing the current system capabilities in terms of connectivity, intelligence, and user demand availability \cite{1}. The popularity of user demand, where users request duplicate copies of same popular contents within a short period of time, leads to substantial fronthaul congestion. To this end, developing \textit{caching} techniques is crucial. By caching the popular content in intermediate nodes, user demand can be delivered easily, without duplicate transmissions from cloud storage servers. Hence, redundant requests can be tremendously eliminated \cite{2}.

Edge caching is a promising technique for
significantly alleviating data congestion  and improving throughput performance \cite{3, 4, 5, 6}. For example, a throughput-physical layer design with one-to-one user-cache connections has been considered in a number of works, e.g., \cite{ 5, 6, 6nn}. The authors of \cite{6nn} maximized sum-rate in multi-level cache-enabled networks, subject to fronthaul capacity, however, they viewed the network solely from the physical-layer perspective without taking into consideration upper-layer facts, e.g., combining users' demands. As a result, in \cite{6nn},
only a single user was assigned to each transmitter, and it is inefficient. For the cache-enabled network, such a limitation  can be overcome by applying the  instantly decodable network coding (IDNC) \cite{7,8}.  As the name indicates, IDNC is an instantaneous coding technique that is suitable for popular contents real-time delivery \cite{6n}. Further, the simplicity of encoding and decoding of contents using XOR binary operation  makes IDNC appropriate for the low-complexity devices \cite{7, 8, 8nn, 9, 10}.

Several NC-enabled caching techniques that addressed the throughput maximization problem are available in literature as follows. In \cite{9}, the authors developed a framework for IDNC transmission time optimization from multiple storage servers. Based on the model in \cite{9}, the authors of \cite{10} employed IDNC
in fronthaul offloading, subject to the immediate content delivery constraint. These works are limited in the sense that they focused on network-layer view of the problem and ignored physical-layer aspects, such as rate and power. To overcome this limitation, the works in \cite{7, 8} included the transmitting rate and power in their network-layer optimizations. However, these works assumed that all the requested contents are fully cached and stored at the network edge. More importantly, they did not consider multi-level caches and only considered a vanilla version of throughput maximization problem by allocating users to storage servers and optimize their powers. In this paper, we consider the
scheduling of content delivery with the objective of maximizing system throughput in a coded multi-level cache-enabled (MLCE) network, subject to fronthaul capacity, caching, power limitation, and NC.

Motivated by the above-mentioned limitations of earlier
NC works, in this paper we consider a coded three-level cache-enabled network, wherein cache capabilities are enabled at SBSs, UAVs, and cache-enabled D2D (CE-D2D) users. We solve the joint optimization of NC user scheduling and power allocation problem by maximizing the network sum-throughput, i.e., maximizing the number of received bits. For such a difficult mixed-integer non-convex optimization problem, we develop an innovative cross-layer NC (CLNC) iterative approach. In particular, our developed approach efficiently iterates between finding the maximum-weight clique (MWC) in the designed two-layered RA-IDNC graph and optimizing the power levels of the transmitters using an iterative function evaluation (IFE) method. Simulation results are presented to depict
the  throughput gain of the proposed approach over the existing solutions.
\ignore{
This paper is organized as follows. In \sref{SMM}, we describe the system model and formulates the throughput maximization problem. In \sref{NC}, the joint power allocation and network-coded user scheduling problem in multi-level
cache-enabled networks is decoupled and solved. In \sref{NR}, presents and discusses the numerical results and in \sref{C}, we conclude the paper.}

\section{System Model and Problem Formulation} \label{SMM}

\subsection{System Model and Parameters} \label{SM}
Similar to \cite{6nn}, we consider a heterogeneous MLCE network, which
consists of $N$ cache-enabled D2Ds (CE-D2Ds) users, $K$ small base-stations (SBSs), and $U$ unmanned aerial vehicles (UAVs). The sets of CE-D2D users, SBSs, and UAVs are denoted
by  $\mathcal{N}=\{1,2,\cdots,N\}$,  $\mathcal{K}=\{1,2,\cdots,K\}$, and  $\mathcal{U}=\{1,2,\cdots,U\}$, respectively. The set of transmitters in the network is denoted by $\mathcal T=\mathcal N \cup \mathcal K \cup \mathcal U $ and they cooperate with each other in streaming a frame of popular files $\mathcal{F} =\{1,2,\ \cdots, F\}$ of $F$ source files to single-antenna $M$ users, denoted by the set $\mathcal{M}=\{1,2, \cdots,M\}$. Fig. \ref{fig1} shows
a three-level cache-enabled network with $2$ SBSs, $2$ UAVs, $1$
CE-D2D user, and $6$ users. 

\begin{figure}[t]
	\centering
	\includegraphics[width=0.45\linewidth]{./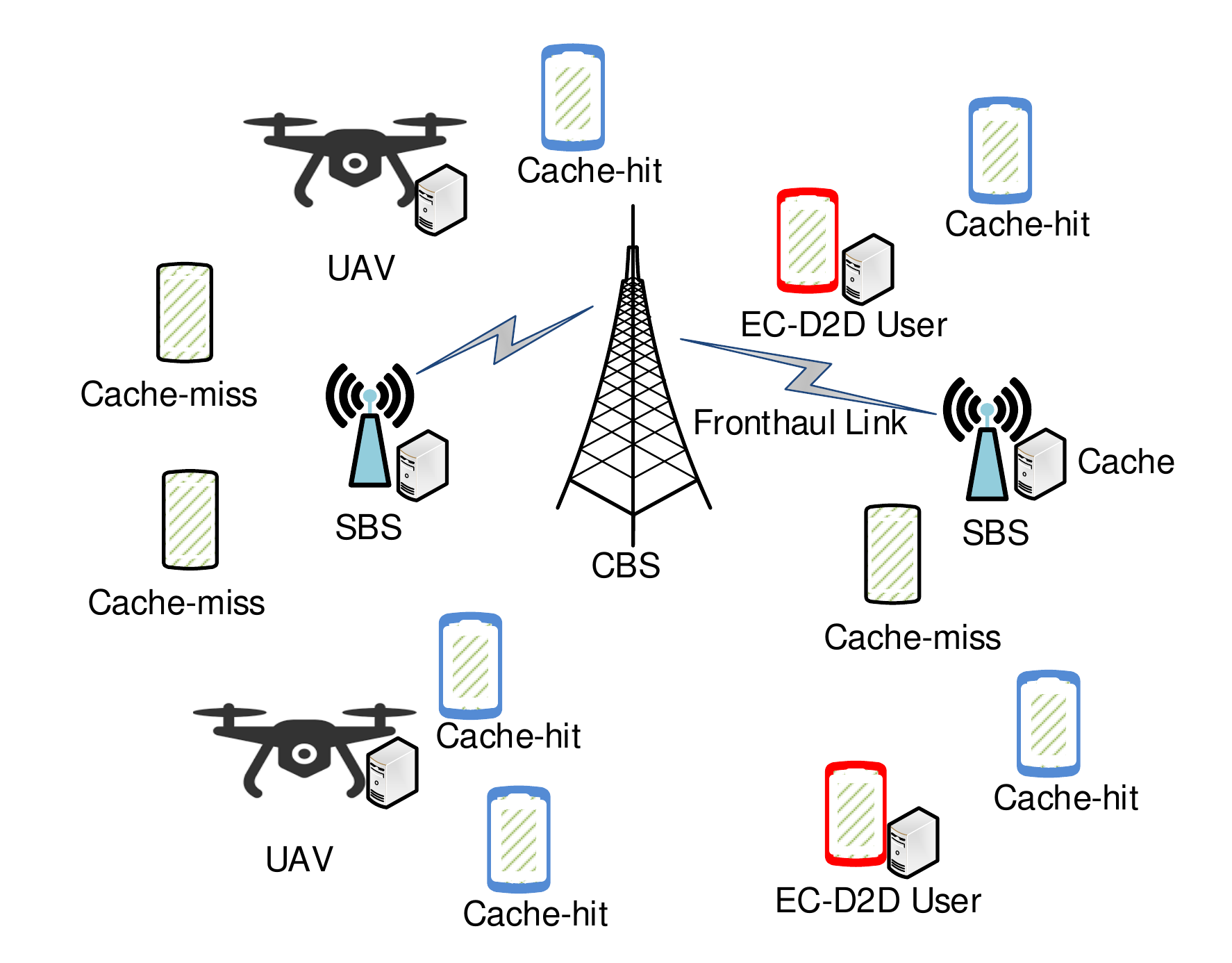}
	\caption{ A heterogeneous MLCE network consists of  $2$
		SBSs, $2$ UAVs, $2$ CE-D2D user, and $8$ users.}
	\label{fig1}
\end{figure}

The set of users $\mathcal M$ can be divided into cache-hit users denoted by the set $\mathcal C_1$ and cache-miss users denoted by the set $\mathcal C_2$. Specifically, $\mathcal C_1$ represents a set of users whose demands are being cached by UAVs and CE-D2D users and $\mathcal C_2$ represents a set of users whose demands are being missed at the network. We consider that only SBSs are
connected to the cloud base-station (CBS), and accordingly, SBSs can pre-fetch missed users' demands from the CBS and serve cache-miss users. The UAVs and CE-D2D users are cooperating in serving cache-hit users. Users' demands usually represent a frame of popular files that is available at the CBS. The CE-D2D users, UAVs, and SBSs can only store $\mu F$ files, where $0\leq\mu\leq 1$ is the caching ratio. The set of cached files by CE-D2D user $d$, UAV $u$, and SBS $s$ are denoted as $\mathcal H_d$, $\mathcal H_u$, and $\mathcal H_s$, respectively. We consider a centralized scenario and therefore, the  CBS is responsible for making the NC scheduling and power allocation decisions. Unless otherwise stated, the term ``transmitter $t$" will be referred to a transmitting CE-D2D user, UAV, or SBS. 

The transmission power level of transmitter $t$ is denoted as $P_{t}$ and thus, the received data rate (in bits/s) at cache-hit user $n$ from  transmitter $t$ can be expressed as $ R_{n,t}=W\log_2(1+\frac{P_{t} \gamma_{n,t}}{1+\sum_{m=1,m\neq t}^TP_{m} \hat{\gamma}_{n,m}})$, where $\gamma_{n,m}$ is the channel-gain-to-noise-ratio (CNR) between the $t$-th transmitter and the $n$-th cache-hit user, and $\hat{\gamma}_{n,m}$ is the interfering-channel-gain-to-noise-ratio (INR) between transmitter $m$ at cache-hit user $n$. For a cache-miss user $n$ that is associated with $(n,s) \in \mathcal C_2$, the files
needs to be fetched from the CBS and served by the
SBS $s$.  In this case, the communication is done in two hops:
1) the CBS sends the data to SBS $s$ through
a fronthaul link of capacity $C_{fh}$, and 2) SBS $s$ transmits the data to the user with an SINR ${\gamma}_{n,s}$, which gives a rate expression similar to $R_{n,t}$. Hence, the achievable overall data rate of a cache-miss
user $n$ associated with SBS $s$ is expressed as $
R_{n,s}=\min\left(C_{fh}, W\log_2\left(1+\frac{P_{s} \gamma_{n,s}}{1+\sum_{m=1,m\neq s}^TP_{m} \hat{\gamma}_{n,m}}\right)\right)$.
We consider that the reception of an uncoded/encoded ﬁle transmitted from transmitter $t$ is successful at user $n$ if $R_{t}\leq R_{n,t}$. The set of achievable capacities of all users across all transmitters is denoted by $\mathcal R$ and represented as a product of the set of the achievable capacities, i.e., $
\mathcal{R} = \bigotimes_{(n,t) \in \ \mathcal{M} \times \mathcal{T} } R_{n,t}$.

We consider that users in the sets $\mathcal C_1$, $\mathcal C_2$  have downloaded some files and are interested to download one file from $\mathcal F$ at any arbitrary time epoch. This can be represented by the side information of user $n$ as follows: 1) the \textit{Has} set $\mathcal{H}_{n}$ that represents previously obtained files by the user $n$; 2) the \textit{Lacks} set $\mathcal {L}_n = \mathcal{F}\backslash \mathcal {H}_n$ that represents non-requested files, 3)
the requested file $\mathcal {W}_n$ by user $n$ in the current scheduling frame.

\subsection{Problem Formulation}
The problem of streaming a set of accessible files from the set of transmitters $\mathcal T$ that maximizes throughput is equivalent to schedule users to the transmitting devices $\mathcal T$ and optimize the power levels under the constraint that each user can stream its requesting files from only one transmitter.

Let $\kappa_{t}$ be a coding combination of some files in $\mathcal F$ that is provided by transmitter $t$ to a set of targeted users denoted by $\tau_{t}$ and expressed as $\tau_{t}= \{n \in \mathcal{N}\ ||\kappa_{t} \cap \mathcal{W}_n| = 1 \mbox{ \& } R_{t}\leqslant R_{n,t}\}$, where $R_{t}$ is the minimum adopted transmission rate of transmitter $t$. Let the binary variable $X_{n,t}$ (where $n \in \mathcal N$ and $t \in \mathcal T$) be $1$ if user $n$ is assigned to transmitter $t$, and $0$ otherwise. Let $\mathcal S$ denote a set of transmitters that schedule users in $\mathcal C_2$. For a typical optimization formulation, let $\mathcal C_3$ be the set of cache-miss users that do not satisfy the feasible condition of being scheduled to a UAV or CE-D2D user. This case is typically occurred when the UAVs and the CE-D2D users do  not
cache the requested files of cache-miss users.  The throughput maximization optimization problem considered in this work can then be formulated as follows
\begin{equation*}
\begin{split}
\mathcal{P}_1: 
\max_{\substack{X_{i,j} \in \{0,1\}\\ P_t, \tau_t, \tau_s}} &\sum _{\substack{t\in \mathcal{T}  \\ n\in \tau_t \cap \mathcal{C}_1 }} X_{n,t} \min_{n \in \mathcal{\tau}_{t}} R_{n,t}+ \sum _{\substack{ s \in \mathcal{S}\\n\in \tau_t \cap \mathcal{C}_2  }} X_{n,s} \min_{n \in \mathcal{\tau}_{s}} R_{n,s}\\
{\rm s.t.\ }
& \text{(C1):}\hspace{0.1cm} \sum _{t} X_{n,t}\leqslant1,~\forall n\in \mathcal{N}, \\
& \text{(C2):}\hspace{0.1cm} \kappa_{t}\subseteq \mathcal P(\mathcal{H}_{t}),~\forall t\in \mathcal{T},\\
& \text{(C3):}\hspace{0.1cm} \sum _{(n,t)\in \mathcal C_3} X_{n,t}=0,\\
& \text{(C4):}\hspace{0.1cm} 0 \leq P_{t} \leq P_{\max},
\end{split}
\end{equation*}
where the optimization is over the assignment variable $X_{n,t}$, NC variables $\{\kappa_{t}, \kappa_{s}\}$, and continuous variable $P_{t}$. Constraint (C1)  ensures assigning the same user to only one transmitter; constraint (C2)  translates the caching limitation of each transmitter; constraint (C3) corresponds to the fact that cache-miss users can only be served by SBSs after fetching the
request from the CBS, and constraint (C4) bounds the maximum transmit power of each transmitter.

($\mathcal P_1$) is a mixed-integer non-linear programming problem. In fact, assuming the given NC scheduling,  a special case of ($\mathcal P_1$) is  non-convex power allocation problem which is NP-hard. Thereby, $\mathcal P_1$ is NP-hard and it can be optimally solved by using branch-and-bound (BB) or exhaustive-search methods. However, these approaches   require high computational complexity that increases exponentially with the number of transmitters and users.
In this work, capitalizing the graph theory, we propose an iterative method to solve ($\mathcal P_1$) with the reduced computational complexity.


\section{Network Coding Scheduling and Power Optimization}\label{I}
In this section, we recommend that instead of solving ($\mathcal P_1$) optimally
due to the prohibitive computational complexity, a better way to
tackle such high complexity is to solve ($\mathcal P_1$) iteratively.
Specifically, ($\mathcal P_1$) is decomposed into the coordinated scheduling and power allocation sub-problems. For the given  transmit power  allocation of each transmitter $t \in \mathcal T$, the coordinated scheduling problem is solved using graph-based method. Afterward, for the resulting NC and user-transmitter pairs schedule, the power allocation problem is solved.  We iterate between these two steps until convergence.

Towards that goal, we first consider a sub-problem of coordinated scheduling,
and it can be written as 
\begin{equation*}
\begin{split}
\mathcal{P}_2: 
\max_{\substack{X_{i,j} \in \{0,1\}\\ \tau_t, \tau_s}} &\sum _{\substack{t\in \mathcal{T}\\n\in \tau_t \cap \mathcal{C}_1 }} X_{n,t} \min_{n \in \mathcal{\tau}_{t}} R_{n,t}+ \sum _{\substack{ s \in \mathcal{S}\\n\in \tau_t \cap \mathcal{C}_2  }} X_{n,s} \min_{n \in \mathcal{\tau}_{s}} R_{n,s}\\
{\rm s.t.\ }
& \text{(C1)},\text{(C2)},\text{(C3)}.
\end{split}
\end{equation*}
On the other hand, for the NC and cache-hit/miss user-transmitter pairs scheduling, obtained from $(\mathcal{P}_2)$,  the power allocation sub-problem is written as
\begin{equation*}
\begin{split}
\mathcal{P}_3: 
\max_{\substack{P_t}} &\sum _{\substack{t\in \mathcal T }} |\tau_t| * \min_{n \in \mathcal{\tau}_{t}} R_{n,t}+ \sum _{\substack{s\in \mathcal S  }} |\tau_s| * \min_{n \in \mathcal{\tau}_{s}} R_{n,s}\\
{\rm s.t.\ }
& 0 \leq P_{t} \leq P_{\max},
\end{split}
\end{equation*}
 In the following two sub-sections, we will solve ($\mathcal{P}_2$), ($\mathcal{P}_3$), respectively, and the overall proposed algorithm will be provided in the last-subsection. 

\subsection{Coordinated Scheduling: Two-Layered RA-IDNC Graph}\label{CS}
Let $\mathcal{A}_k$ denote the set of all possible associations between users, files that are cached by transmitter $t$, and achievable capacities of transmitter $t$, i.e., $\mathcal{A}_k=\mathcal{U}\times \mathcal{H}_t\times \mathcal R$, and $a$ is an element
in  $\mathcal{A}_t$. For convenience, $n_a$ represents user $n$ in  association $a$. The whole set of all possible associations is simply the union of the possible associations of all transmitters, i.e., $\mathcal{A}= \bigcup\limits_{t \in \mathcal{T}}\mathcal{A}_t$.

The \textit{first-layer} RA-IDNC graph is denoted by $\mathcal G_1(\mathcal V_1, \mathcal E_1)$  wherein $\mathcal V_1$
and $\mathcal E_1$ are the set of vertices and edges, respectively. This graph is introduced to care about all the users whose requested files are being cached by the set of transmitters $\mathcal T$. A vertex $v \in \mathcal{V}_1$ in this graph is generated for each association in $\mathcal A$, i.e., $v=a$ and $|\mathcal V_1|=|\mathcal A|$. The set of edges $\mathcal E$ is constructed by connecting each two distinct vertices $v_{nf}$ and
$v_{n'f'}$ representing the same transmitter $t$ if one of the two following IDNC conditions is true:
\begin{itemize}
	\item \textbf{C1:} $f_a=f_{a'} \Rightarrow $ File $f$ is needed by users $n_a$ and $n_{a'}$.
	\item \textbf{C2:} $f \in \mathcal H_{a'}$ and $f_{a'} \in \mathcal H_a \Rightarrow$ The file combination $f_a\oplus f_{a'}$ is	instantly decodable for both users $n_a$ and $n_{a'}$.
\end{itemize}
Moreover, two distinct vertices $v_{nf}$ and
$v_{n'f'}$ representing the same transmitter $t$ are connected to each other if their adopted rates are equal, i.e., \textbf{C3:} $R_a=R_{a'}$. Similarly, two distinct vertices $v_{nf}$ and
$v_{n'f'}$ representing two different transmitters $t$, $t'$ are connected to each other if this condition is satisfied \textbf{C4:} $n_a = n_{a'}$ and $t=t'$. \textbf{C4} insists that
the same user cannot be scheduled to more than one transmitter.

The above construction of graph $\mathcal G_{1}$ includes only the requested files that are cached by transmitter $t$, $\forall t \in \mathcal T$. To this end, we suggest designing a second graph that contains vertices for all the users that are not in the IDNC schedule of $\mathcal G_{1}(\mathcal V_1, \mathcal E_1)$, so as to possibly schedule users in the sets $\mathcal C_3$, $\mathcal C_2$ to the available SBSs. Specifically, after choosing the combination
$\kappa_t(\mathcal T)$, $\forall t\in \mathcal T$, we 
propose including vertices for un-scheduled users in $\mathcal G_{1}$ as long as they do not compromise
the feasibility conditions of the scheduled users
in $\tau(\kappa_t), \forall t \in \mathcal T$. This is done by designing the \textit{second-layer} RA-IDNC graph, denoted by $\mathcal G_{2}(\mathcal V_2, \mathcal E_2)$, that can be
constructed by generating a vertex $v_{nf}$ for all users $n \in \mathcal C_2 \cup \mathcal C_3$. The edges of $\mathcal G_{2}$ are generated by \textbf{C1}, \textbf{C2}, \textbf{C3}, and \textbf{C4} similar to $\mathcal G_{1}(\mathcal V_1, \mathcal E_1)$. Consequently, the union of $\mathcal G_{1}$, $\mathcal G_{2}$ gives the entire graph $\mathcal G$ and its corresponding vertices is $\mathcal{V} = \bigcup\limits_{i=\{1,2\}}\mathcal{V}_{i}$.

The weight of each vertex is the weighted sum-throughput of the represented associations, i.e., the weight of vertex $v$ that is associated with $a$ is
\begin{equation}\label{v}
w(v)=
\begin{cases}
R_{n_a,t} ~~\text{if $v \in \mathcal V_1$},\\
\min\left(C_{fh}, R_{n_a,s}\right) ~~\text{if $v \in \mathcal V_2$}.
\end{cases}
\end{equation}

Using the constructed two-layered IDNC graph,  $\mathcal P_2$ is similar to
MWC problems in several aspects. In MWC, two vertices must be adjacent in the graph,
and similarly, in ($\mathcal P_2$), two different users
must be scheduled to two different transmitters. More importantly, the objective of ($\mathcal P_2$) is to maximize
the total sum-rate of all users, and similarly, the
goal of MWC is to maximize the weight of all the vertexes. Consequently, we have the following theorem. 
\begin{theorem} \label{th:1w}
\textit{Using the two-layered RA-IDNC graph, problem ($\mathcal P_2$) can be equivalently transformed to the problem of determining the MWC in $\mathcal G$.}\label{th1}
\end{theorem}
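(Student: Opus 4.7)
The plan is to prove the theorem by constructing a weight-preserving bijection between feasible schedules of $(\mathcal{P}_2)$ and cliques in $\mathcal{G}$. Once this bijection is in place, maximizing the throughput on one side coincides with maximizing the clique weight on the other, which is exactly the MWC problem on $\mathcal{G}$.

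Forward direction: given a schedule $\{X_{n,t}, \kappa_t, \tau_t\}$ satisfying (C1)--(C3) of $(\mathcal{P}_2)$, I would set the adopted rate $R_t = \min_{n\in\tau_t} R_{n,t}$ at each transmitter and map each pair $(n,t)$ with $n\in\tau_t$ to the vertex associated with $a=(n,f,R_t)$, where $f=\kappa_t\cap\mathcal{W}_n$ is the unique target file (which exists because $\kappa_t$ is an IDNC combination for $n$). I would then check each edge condition in turn. Within a single transmitter, IDNC encodability of $\kappa_t$ over $\tau_t$ is equivalent to every pair of targeted users satisfying either \textbf{C1} (same needed file) or \textbf{C2} (XOR-decodable via cached side information), which is the classical pairwise characterization of IDNC; the uniform choice of $R_t$ gives \textbf{C3}. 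Between transmitters, constraint (C1) of $(\mathcal{P}_2)$ forbids scheduling a user twice, which is exactly \textbf{C4}. Cache-hit users populate $\mathcal{V}_1$ and cache-miss users (in $\mathcal{C}_2\cup\mathcal{C}_3$) populate $\mathcal{V}_2$, so the induced vertex set is a clique of $\mathcal{G}$ of weight $\sum_t |\tau_t|\,R_t$, which matches the objective of $(\mathcal{P}_2)$ term by term thanks to \eref{v}.

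Reverse direction: given a clique $\mathcal{C}\subseteq\mathcal{V}$, partition its vertices by transmitter into $\mathcal{C}_t$ and set $\tau_t=\{n_a : a\in\mathcal{C}_t\}$. I would build $\kappa_t$ as the XOR of the distinct files carried by the vertices in $\mathcal{C}_t$; the pairwise conditions \textbf{C1}/\textbf{C2} propagate, again by the standard IDNC-graph lemma, to instant decodability of $\kappa_t$ at every user in $\tau_t$, so $\kappa_t\subseteq\mathcal{P}(\mathcal{H}_t)$ holds since each vertex is drawn from $\mathcal{A}_t=\mathcal{U}\times\mathcal{H}_t\times\mathcal{R}$, yielding (C2). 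Condition \textbf{C4} gives (C1), the $\mathcal{V}_2$ vertices by construction place cache-miss users at SBSs so (C3) holds, and \textbf{C3} fixes a common adopted rate $R_t\leq\min_{n\in\tau_t} R_{n,t}$, so the induced schedule is feasible with throughput at least the clique weight. Combining with the reverse inequality from the forward map, equality holds at the optima.

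The main obstacle is the step where pairwise graph adjacencies under \textbf{C1}/\textbf{C2} are lifted to the global property that the single XOR packet $\kappa_t$ is instantly decodable for \emph{every} user in $\tau_t$ simultaneously; this is the standard IDNC graph lemma from \cite{7,8} and I would invoke it rather than re-derive it. The rest is accounting: checking that \textbf{C3} correctly enforces a single adopted rate per transmitter so that $\sum_{v\in\mathcal{C}_t} w(v)=|\tau_t|\min_{n\in\tau_t} R_{n,t}$, and that the two-layer split $\mathcal{V}_1\cup\mathcal{V}_2$ routes cache-hit and cache-miss contributions to the correct summands of the $(\mathcal{P}_2)$ objective via the two branches of \eref{v}.
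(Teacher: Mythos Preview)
Your proposal is correct and follows essentially the same approach as the paper: both argue that cliques in $\mathcal{G}$ correspond to feasible schedules of $(\mathcal{P}_2)$ with matching objective value, so the MWC yields the optimal schedule. Your version is in fact more complete than the paper's own proof, which only sketches the clique-to-schedule direction and asserts the weight equality, whereas you spell out both directions of the bijection and explicitly flag the one nontrivial step (lifting pairwise \textbf{C1}/\textbf{C2} adjacency to global instant decodability of the XOR) as the standard IDNC-graph lemma.
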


\begin{proof} 
Let $\mathbf{C}^*=\{v_{1},v_{2}, \cdots, \, v_{|\mathbf C|}\},~ \forall v \in \mathcal G$ be the MWC and  is associated with the feasible schedule $\{ \{a_1,a_2,\cdots, \, a_{|\tau_{1}|}\},\ \cdots, \, \{a'_1,a'_2,\cdots, \, a'_{|\tau_{|\mathcal T|}|}\} \}$. Let $\mathbf C$ is the set of all possible cliques in $\mathcal G$. For each vertex $v \in \mathbf C^*$ that is associated with association of user, file, transmitter, and rate, the weight $w(v)$ in (2) is the maximum possible data rate that the induced user in $v$ receives, i.e., the utility under the optimal transmit power. Therefore, the weight of the MWC $\mathbf{C}^*$  is precisely the objective function of problem ($\mathcal P_2$) and can be written as $ w(\mathbf{C}^*)= \sum\limits _{v\in \mathbf{C}^*}w(v)= \sum\limits _{a\in \mathcal{A}}w(a) =\sum\limits _{a\in \mathcal A}\min_{n_a \in \mathcal{\tau}_{t}} R_{n,t} +\sum _{a \in \mathcal A} \min_{n_a \in \mathcal{\tau}_{t}} R_{n,t}.$  Since two vertices are connected, i.e., different users are scheduled to different transmitters, constraint (C1) holds. Moreover, all the vertices are indeed generated based on the cached files of each transmitter, and accordingly, constraint (C2) holds. In addition, conditions \textbf{C1}, \textbf{C2} ensure the NC combinations of constraint (C2).
	
\end{proof}

\subsection{Power Control Optimization}\label{PC}
In this sub-section, we derive optimal power allocations
to maximize sum-throughput for a given MWC that is associated with network-coded user scheduling. Specifically, for a fixed network-coded user scheduling, the non-convex problem ($\mathcal P_3$) can be solved 
locally  using IFE approach. 
The main strategy of such approach is to let the gradient of the objective function
is equal to zero and to
manipulate the stationarity of Karush-khun-Tucker (KKT)
conditions. As such, we derive the proper power update manipulation that is shown in the following lemma. 

\begin{lem} \label{pr:1}
Given the resulting NC and user-transmitter schedule in
problem ($\mathcal P_3$), a converged power
allocation is obtained by updating power $\{P_t\}_{t\in \mathcal T}$ at the $(k + 1)$-th iteration, based on the following power update manipulation
\begin{eqnarray}
\label{Power_update}
P_t^{k+1}=\left[\frac{|\tau_t|*\frac{\text{SINR}_{\hat{n},t}}{1+\text{SINR}_{\hat{n},t}}}{\sum_{\substack{{l=1, l\neq t}}} \left(|\tau_l|*\frac{(\text{SINR}_{\hat{u},l})^2}{1+\text{SINR}_{\hat{u},l}}\right)\frac{\gamma_{\hat{u},l}}{{P}_l\gamma_{\hat{u},l}}}\right]_{0}^{P_{\max}},
\end{eqnarray}
where  $\hat{n}=\arg \min_{n \in \mathcal{\tau}_{t}}R_{n,t}$,  $\hat{u}=\arg \min_{n \in \mathcal{\tau}_{l}}R_{u,l}$, $\forall t, l \in \mathcal{T}$ and $t \neq l$;  $\text{SINR}_{\hat{n},t}=\frac{P_{t}^{(k)} \gamma_{\hat{n},t}}{1+\sum_{m=1,m\neq t}^TP_{m}^{(k)} \hat{\gamma}_{\hat{n},m}}$; and 
$P_t^{(k)}$ is the transmit power of the $t$-th transmitting node at the $k$-th iteration.
\end{lem}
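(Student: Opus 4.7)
The plan is to derive \eref{Power_update} as the fixed-point characterization of the Karush--Kuhn--Tucker (KKT) stationarity conditions of $(\mathcal P_3)$. Since the user--transmitter assignment and the NC schedules $\{\tau_t\}$ are frozen during the power update, the worst-case user $\hat n=\arg\min_{n\in\tau_t}R_{n,t}$ is piecewise constant in $\mathbf P$, so on a neighborhood of each iterate I may replace $\min_{n\in\tau_t}R_{n,t}$ by the single rate $R_{\hat n,t}$ and treat the objective as a smooth function of $P_t$.

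First, I would form the Lagrangian by attaching non-negative multipliers to the box constraint (C4) and compute $\partial\mathcal L/\partial P_t$. The $P_t$-dependence in the objective splits into two kinds of contributions: a \emph{useful-signal} term from the summand $l=t$ through the numerator of $\text{SINR}_{\hat n,t}$, and an \emph{interference} term from every summand $l\neq t$ through the denominator of $\text{SINR}_{\hat u,l}$. Differentiating $W\log_2(1+\text{SINR}_{\hat n,t})$ in $P_t$ and simplifying yields the factor
\[
\frac{1}{1+\text{SINR}_{\hat n,t}}\cdot\frac{\text{SINR}_{\hat n,t}}{P_t},
\]
whereas the chain rule through the interference $P_t\hat\gamma_{\hat u,t}$ contributes, for each $l\neq t$, the factor
\[
-\frac{1}{1+\text{SINR}_{\hat u,l}}\cdot\frac{(\text{SINR}_{\hat u,l})^2}{P_l\gamma_{\hat u,l}}\cdot\hat\gamma_{\hat u,t}.
\]

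Second, I would equate the interior derivative to zero, cancel the common $W/\ln 2$, weight each term by its cardinality $|\tau_l|$, and observe that the resulting scalar equation is linear in $1/P_t$. Isolating $P_t$ then reproduces exactly the ratio appearing inside the bracket of \eref{Power_update}. The box constraint is enforced by the complementary-slackness part of KKT: when the unconstrained stationary value falls outside $[0,P_{\max}]$ the corresponding multiplier is active and the optimal $P_t$ coincides with the violated boundary, which is precisely the projection $[\,\cdot\,]_0^{P_{\max}}$. Iterating this mapping coordinate-wise over $t$ is a best-response scheme whose fixed points are exactly the KKT points of $(\mathcal P_3)$; that is the sense in which a "converged" allocation is obtained.

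The main obstacle will be the non-differentiability of $\min_{n\in\tau_t}R_{n,t}$ at configurations where the minimizing user switches, compounded by the intrinsic non-convexity of $(\mathcal P_3)$. I would address the former by restricting the analysis to the piecewise-smooth region on which $\hat n$ is fixed and, if needed, invoking the Clarke subdifferential to justify equating the generalized gradient to zero on the boundary between regions. The latter means the lemma must be read as producing a locally (stationary) optimal point rather than a global optimum, which is consistent with the IFE viewpoint adopted in \sref{PC}.
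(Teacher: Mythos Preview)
Your proposal is correct and follows essentially the same route as the paper: write the objective with the minimizing user $\hat n$ frozen, differentiate with respect to $P_t$ to obtain a useful-signal term and a sum of interference terms, set the derivative to zero to isolate $P_t$, and then iterate with projection onto $[0,P_{\max}]$. Your treatment is in fact more careful than the paper's---you make the KKT/complementary-slackness justification for the projection explicit and flag the non-differentiability of the $\min$---but the core derivation is identical.
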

\begin{proof}
Leveraging  the MWC solution to problem ($\mathcal P_2$),  problem ($\mathcal P_3$) can be re-written as follows
\begin{equation} \label{4eq}
\begin{split}
\max_{\substack{P_t}}& \sum _{t \in \mathcal T } |\tau_t| * \min_{n \in \mathcal{\tau}_{t}} R_{n,t}\\ 
{\rm s.t.\ }
& 0 \leq P_{t} \leq P_{\max}.
\end{split}
\end{equation}
Now, \textit{Lemma} \ref{pr:1} can be proved in two main steps that are provided as follows. First, we express the problem \eref{4eq} for $P_t$ in terms of all the other transmitters $\{P_l\}_{l \neq t, l \in \mathcal T}$ and take the partial derivative of the objective function of the power allocation in \eref{4eq} with respect to $P_t$. The partial derivative of the
objective function of the power allocation problem \eref{4eq} with
respect to $P_t$ is as follows
\begin{align}
\label{power_opt_2}
& \frac{\partial R_t}{\partial P_{t}}= \frac{\partial }{\partial P_{t}}\left(|\tau_t| *\min_{n \in \mathcal{\tau}_{t}} R_{n,t}+ \sum _{\substack{{l=1, l\neq t}}} |\tau_l| * \min_{u \in \mathcal{\tau}_{l}} R_{l,u} \right)\\
&=\frac{1}{P_t} \frac{|\tau_t|*\text{SINR}_{\hat{n},t}}{1+\text{SINR}_{\hat{n},t}}-\sum_{l=1, l\neq n}  \left(\frac{(\text{SINR}_{\hat{u},l})^2}{1+\text{SINR}_{\hat{u},l}}\right) \frac{|\tau_l|\gamma_{\hat{u},t}}{\widehat{P}_l\gamma_{\hat{u},l}}.
\end{align}	
Second, by solving $\frac{\partial R_t}{\partial P_t}=0$, we obtain	
\begin{eqnarray}
\label{Power_updatef}
P_t=\frac{|\tau_t|*\frac{\text{SINR}_{\hat{n},t}}{1+\text{SINR}_{\hat{n},t}}}{\sum_{\substack{{l=1, l\neq t}}} \left(|\tau_l|*\frac{(\text{SINR}_{\hat{u},l})^2}{1+\text{SINR}_{\hat{u},l}}\right)\frac{\gamma_{\hat{u},l}}{{P}_l\gamma_{\hat{u},l}}}.
\end{eqnarray}
By solving \eref{Power_updatef}, we can obtain the stationary power allocation for the $t$-th transmitter,$  \forall t$.  However, a closed-form power allocation to solve \eref{Power_updatef} is
intractable. To this end, we aim to solve \eqref{Power_updatef} iteratively.  Specifically, we denote $P_t^{(k)}$ is the given transmit power of the $t$-th transmitter at the $k$-th iteration.  By evaluating the R. H. S. of \eqref{Power_updatef} for $P_t^{(k)}$, $\forall t$, and  projecting the evaluated value in the feasible region  $\eqref{4eq}$, we obtain the  transmit power of the $t$-th transmitter at the $(k+1)$-th iteration, $\forall t$. By iteratively repeating the aforementioned two steps, we obtain the converged transmit power allocations for all the transmitting nodes. This completes the proof of \textit{Lemma \ref{pr:1}}. 
\end{proof}
A more detailed analysis of convergence of the aforementioned power allocations can be found in \cite{6nn} and the references therein. Owing to the space limitation, such an analysis is omitted.

\subsection{Overall Algorithm Development}

The  aforementioned two-steps of solving the MWC problem for a given power allocation and optimizing the power allocation for the resulting NC and user schedule that are explained in \sref{CS} and \sref{PC}, respectively, are iterated until we obtain feasible and converged solution to the overall joint optimization problem in ($\mathcal P1$). By integrating theses two steps, we obtain  Algorithm \ref{Alg1}. Algorithm \ref{Alg1} provides a sub-optimal yet efficient solution to original optimization problem $\mathcal{P}_1$ while requiring significant less computational complexity than an exhaustive search method.

The computational complexity of the proposed algorithm depends on the complexity of solving ($\mathcal P_2$), ($\mathcal P_3$). For solving ($\mathcal P_2$), we need a computational complexity of $O(M^2F^2)$ while for solving  ($\mathcal P_3$), we need $C_p=O\left(|\mathtt{S_1}| \times |\mathtt{S_2}| \times \cdots |\mathtt{S_{|\mathcal T|}}|\right)$ where $\mathtt{S_1}$ represents the NC user scheduling of the first transmitter. Therefore, the overall computational complexity of our proposed scheme is $O\left(K(M^2F^2+C_p)\right)$, where $K$ is the number of iterations. Therefor, the computational complexity is reduced significantly compared to an exhaustive search.

\setlength{\textfloatsep}{0pt}
\begin{algorithm}[t!]
	\caption{The proposed CLNC approach for solving problem ($\mathcal P1$).} \label{Alg1}
	\begin{algorithmic}[1]
		\STATE \textbf{Initialization:} Set the initial values for $P_{t}=P_{\max}, \forall t \in \mathcal T$ and set $K=1$.

		\REPEAT
		
		\STATE Solve problem ($\mathcal P2$) to obtain the NC and user-transmitter association according to \sref{CS} and \thref{th:1w}.
		
		\STATE Evaluate ($\mathcal P3$) as in \sref{PC} to obtain the power levels of the corresponding transmitters as follows:
			\FOR{ Set $k=1$}

		\STATE Update $P_t= P_t^k$ according to Lemma 1.
		
		\STATE  Set $k=k+1$ and iterate until convergence.
		
		\ENDFOR	
		\STATE  Set $K=K+1$
		\UNTIL{Convergence.}
		
		\STATE \textbf{Output:} Final transmission power for all the transmitting UDs. 
		
	\end{algorithmic}
	
\end{algorithm}

\section{Numerical Results}\label{NR}
In this numerical results section, we assess the throughput maximization
performance of the proposed algorithm against the following existing schemes.
\begin{itemize}
	\item \textbf{Classical IDNC}: This scheme considers only network-layer optimization. For successful file's reception, the selected
	transmission rate in each transmitter is the minimum rates of all the scheduled users.
	\ignore{\item \textbf{Random Linear Network Coding (RLNC):} This scheme associates users to a single transmitter to which it has the maximum rate. RNC is employed for each transmitter that has multiple associated users. The encoding is done by mixing all files with different random coefficients, and similar to the traditional IDNC, a minimum rate of the associated users is selected for the transmission.}
   \item \textbf{Uncoded:} Similar to \cite{6nn}, only physical-layer optimization is considered and each transmitter schedules only one user.
	\item \textbf{RA-IDNC:} This scheme was studied in \cite{8nn} where a minimum transmission rate is selected for all the transmitters.
\end{itemize}
We consider a downlink three-level cached-enabled system that consists of $3$ SBSs, $2$ UAVs, $3$ CE-D2D users, and
$15$ users. The transmitters (i.e., SBSs, UAVs, CE-D2D users) have fixed locations and the users are distributed randomly within a hexagonal cell of radius $900$m. The considered channel gains for user-transmitter pairs follow the standard path-loss model that has three parts: 1) path-loss of $128.1 + 37.6 \log_{10}(\text{dis.[km]})$dB; 2) log-normal shadowing with $4$dB standard deviation, and 3) Rayleigh channel fading with zero-mean and unit variance. The channels are assumed to be perfectly estimated. The noise power and the maximum power are set to $\sigma^2=-168.60$ dBm/Hz and $P_\text{max}=-42.60$ dBm/Hz, respectively. The bandwidth is $10$MHz. The caching ratio $\mu$ is set to $0.6$.
\ignore{
\begin{figure}[t!]
	\centering
	\includegraphics[width=0.6\linewidth]{./fig1new.eps}
	\caption{Average sum throughput in Mbps vs the number of users $M$.}
	\label{fig2}
\end{figure}}

\begin{figure}[t!]
	\centering
	\begin{subfigure}[t]{0.23\textwidth}
		\centerline{\includegraphics[width=1.12\linewidth]{./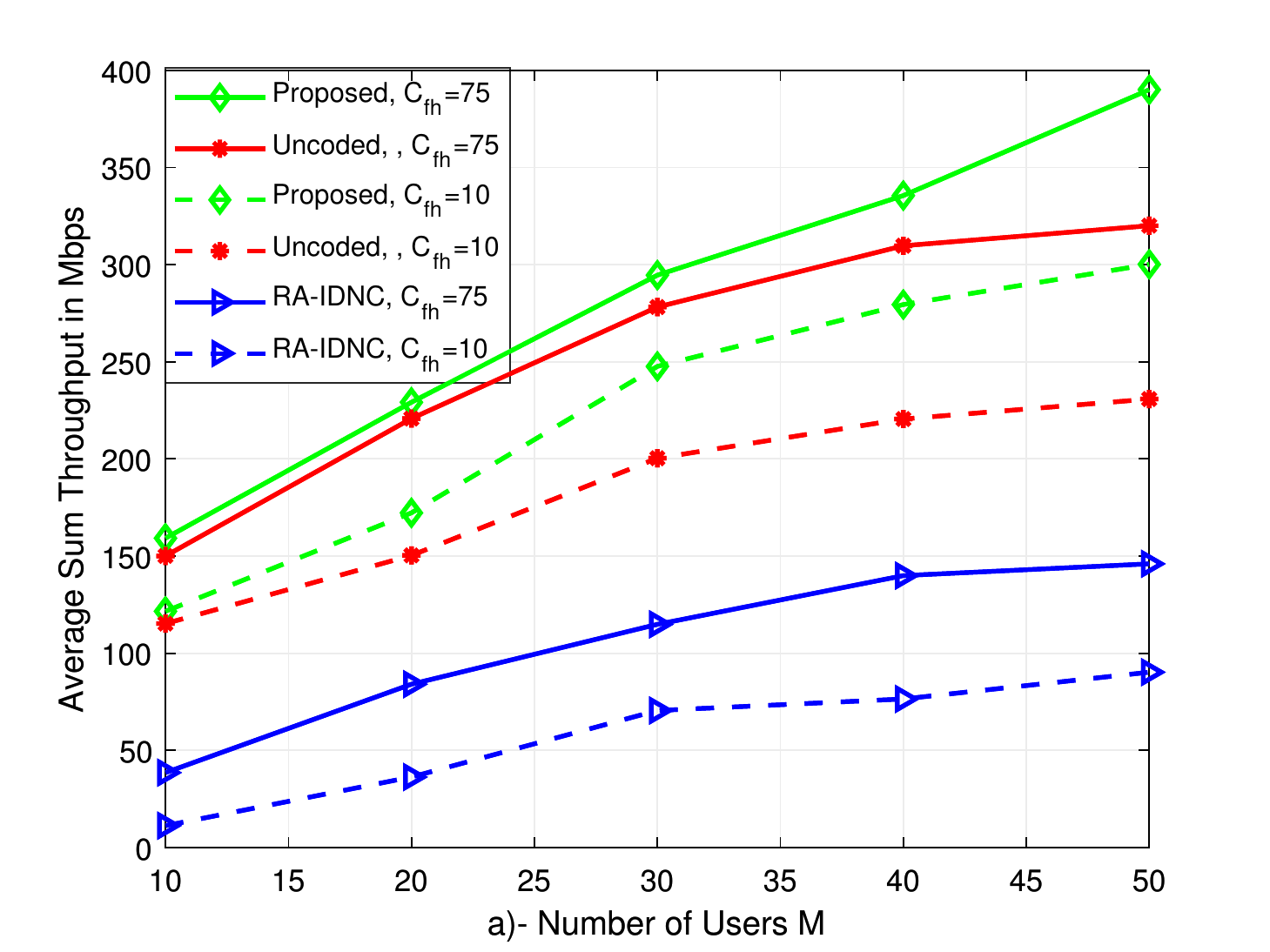}}
		\label{MCT}
	\end{subfigure}%
	~
	\begin{subfigure}[t]{0.23\textwidth}
		\centerline{\includegraphics[width=1.12\linewidth]{./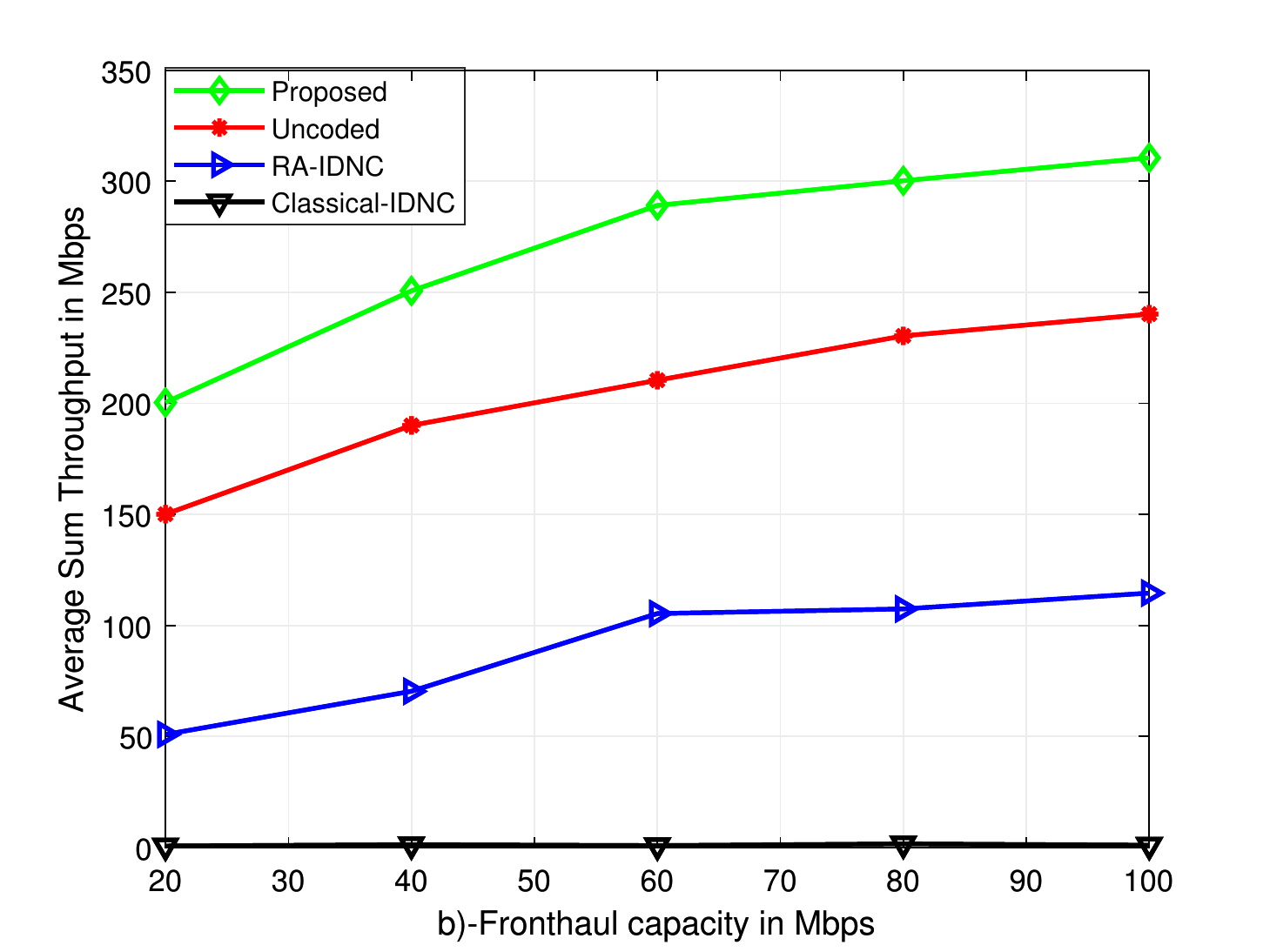}}
		\label{NCT}
	\end{subfigure}
	\caption{Average sum-throughput in Mbps vs: a) number of users $M$; b) fronthaul capacity $C_{fh}$.}
	\label{fig2}
\end{figure}

In Fig. \ref{fig2}-a, we plot the average sum throughput versus the number of users for a three-level cache-enabled network comprising $5$ SBSs, $3$ UAVs, $5$ CE-D2D users, and $30$ files for different values of fronthaul capacity (i.e., $C_{fh}=10, C_{fh}=75$). From the figure, we see that our proposed scheme offers improved performance in terms of throughput maximization with different values of $C_{fh}$ as compared to the other schemes. Such
significant gain is due to the role of our proposed scheme that smartly multiplexes users and selects the best transmission rate of each transmitter, and further controls the power for
interference mitigation.  For a small network, i.e., $M=10$, the uncoded scheme offers close performance to the proposed scheme, i.e., only $10 \%$ improvement over uncoded for $M=10$. This is due to the fact that the proposed scheme suffers from a small number of encoded scheduling opportunities in each transmitter. On the other hand, for a relatively dense network, i.e., $M=50$, the proposed scheme achieves a good performance improvement ($35 \%$) as compared to the contemporary uncoded scheme. This is because the encoded opportunities of multiplexing users to each transmitter increases with the number of users $M$. The performances of the RA-IDNC scheme is heavily suffered from selecting the minimum transmission rate of all the transmitters.

In Fig. \ref{fig2}-b, we show the average sum throughput versus the fronthaul capacity for a three-level cache-enabled network comprising $5$ SBSs, $2$ UAVs, $3$ CE-D2D users, $30$ users, and $30$ files. Similar to what we have discussed in Fig. \ref{fig2}, our proposed scheme shows significant gain on the sum throughput maximization regime as compared to all schemes. From the figure, we can see that our proposed scheme offers an improved performance gain  in the range of $15$\%- $35$\% as compared to the uncocded scheme in  \cite{6nn}.

Finally, Fig. \ref{fig4} shows the average sum throughput of different cache-levels methods of the proposed scheme versus the number of users $M$ for a network comprising
$3$ SBSs, $2$ UAVs, $3$ CE-D2D users, and $40$ files
for different values of fronthaul capacity (i.e., $C_{fh}=10, C_{fh}=75$). The
caching role here becomes particularly pronounced in dens
networks, as shown in Fig. \ref{fig4}, which shows the improved performance of our proposed approach as the number of users increases. Such improved throughput performance demonstrates the pronounced role of our proposed multi-level caching in mitigating the fronthaul
congestion of dense networks. Specifically, at low-fronthaul
capacity regimes, the performance gain of the three-level caching is almost three times the average sum throughput gain of the no-caching
method. Compared with the unocded scheme with two-level caching, our proposed scheme shows a significant improvement in terms of the throughput maximization. Therefore, our proposed scheme provides a more effective way to use MLCE networks and IDNC, and as a result, it is more suitable to be implemented in massive networks for real-time streaming.
\begin{figure}[t!]
	\centering
	\includegraphics[width=0.55\linewidth]{./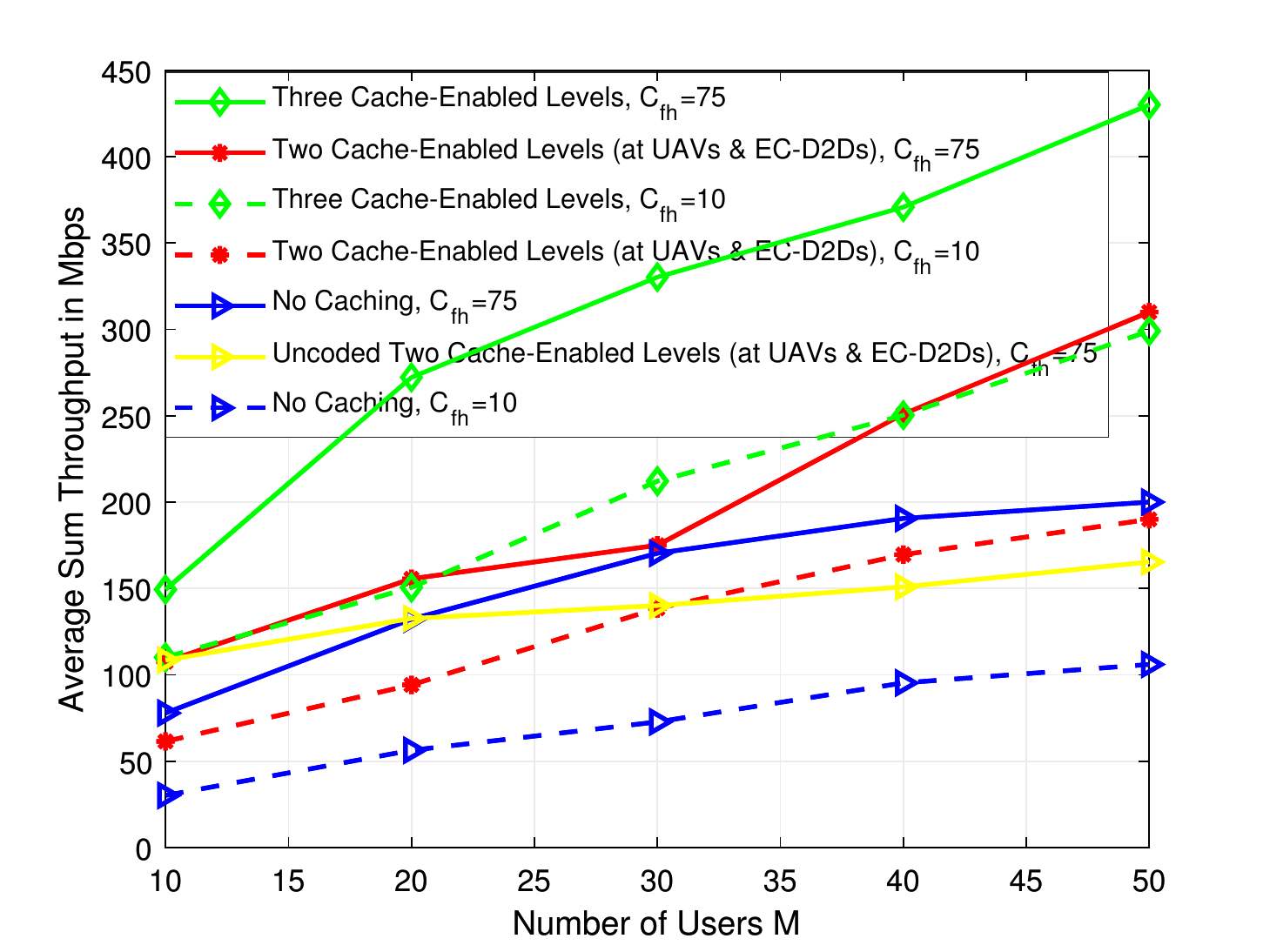}
	\caption{Average sum-throughput in  Mbps vs the number of users $M$.}
	\label{fig4}
\end{figure}

\section{Conclusion}\label{C}
We developed an efficient NC schedule and power optimization framework that improves throughput in MLCE system, where SBSs, UAVs, and CE-D2D users can enable caching technique. The throughput maximization problem was formulated as a joint optimization over NC schedule and power allocation, where the NC schedule sub-problem is proved to be equivalent to a MWC problem, and the power allocation
sub-problem is solved using the iterative function evaluation
approach. Simulation results demonstrated that
our proposed scheme can effectively maximize throughput, especially in dense networks (i.e., the improved average sum throughput  gain
is $35 \%$) as compared to the contemporary uncoded scheme.

\begin{thebibliography}{10}

\bibitem{1}
S. Dang, O. Amin, B. Shihada, and M.-S. Alouini, ``What should 6G
be?" \emph{Nature Electron.,} vol. 3, no. 1, pp. 20-29, Jan. 2020.

\bibitem{2}
E. Bastug, M. Bennis, and M. Debbah, ``Living on the edge: The role
of proactive caching in 5G wireless networks," \emph{IEEE Commun. Mag.,}
vol. 52, no. 8, pp. 82-89, Aug. 2014.

\bibitem{3}
R. Tandon and O. Simeone,  ``Harnessing cloud and edge synergies:
Toward an information theory of fog radio access networks,” \emph{IEEE
Commun. Mag.,} vol. 54, no. 8, pp. 44-50, Aug. 2016.

\bibitem{4}
S. Gitzenis, G. S. Paschos, and L. Tassiulas, ``Asymptotic laws for joint
content replication and delivery in wireless networks,” \emph{IEEE Trans. Inf.
Theory,} vol. 59, no. 5, pp. 2760-2776, May 2013.

\bibitem{5}
 K. S., N. Golrezaei, A. G. Dimakis, A. F. Molisch, and
G. Caire, ``Femto caching: Wireless content delivery through distributed caching helpers,” \emph{IEEE Trans. Inf. Theory,} vol. 59, no. 12,
pp. 8402-8413, Dec. 2013.

\bibitem{6} 
J. Hachem, N. Karamchandani, and S. Diggavi, ``Multi-level coded
caching,” in \emph{Proc. IEEE Int. Symp. Inf. Theory,} Jun. 2014, pp. 56–60

\bibitem{6n}
W. Chen, K. B. Letaief, and Z. Cao, ``Opportunistic network coding for
wireless networks,” in \emph{Proc. IEEE ICC,} Jun. 2007, pp. 4634-4639.

\bibitem{6nn}
A. Douik, H. Dahrouj, O. Amin, B. Aloquibi, T. Y. Al-Naffouri and M. -S. Alouini, ``Mode Selection and Power Allocation in Multi-Level Cache-Enabled Networks," in \emph{IEEE Communications Letters,} vol. 24, no. 8, pp. 1789-1793, Aug. 2020.

\bibitem{7}
M.~S.~Al-Abiad, M.~J. Hossain, and S.~Sorour, ``Cross-layer cloud offloading with quality of service guarantees in Fog-RANs," in \emph{IEEE Trans. on Commun.,} vol. 67, no. 12, pp. 8435-8449, Jun. 2019.

\bibitem{8}
M. S. Al-Abiad, A.~Douik, S. Sorour, and M. J. Hossain, ``Throughput maximization in cloud-radio access networks using cross-layer network coding," \emph{IEEE Trans. on Mobile Comp.,} Early Access, pp. 1-1.

\bibitem{8nn}
M.-S.~Al-Abiad, A.~Douik, and S.~Sorour, ``Rate aware network codes for cloud radio access networks,” \emph{IEEE Trans. on Mobile Comp.,} vol. 18, no 8, pp 1898-1910, Aug. 2019.

\bibitem{9}
A. A. Al-Habob, Y. N. Shnaiwer, S. Sorour, N. Aboutorab, and
P. Sadeghi, ``Multi-client file download time reduction from cloud/fog
storage servers,” \emph{IEEE Trans. Mobile Comput.,} vol. 17, no. 8,
pp. 1924–1937, Aug. 2018.

\bibitem{10}
Y. N. Shnaiwer, S. Sorour, P. Sadeghi, N. Aboutorab, and
T. Y. Al-Naffouri, ``Network-coded macrocell offloading in
femtocaching-assisted cellular networks,” \emph{IEEE Trans. Veh. Technol.,}
vol. 67, no. 3, pp. 2644-2659, Mar. 2018.

\ignore{\bibitem{12}
M.~R.~Garey and D.~S.~Johnson, ``Computers and intractability: A guide to the theory of np-completeness," in \emph{Freeman,} 1979.

\bibitem{13}
K.~Yamaguchi and S.~Masuda, ``A new exact algorithm for the maximum
weight clique problem," in \emph{Computers and Communications 23rd Intern. Conf.on Circuits/Systems (ITC-CSCC’08),} 2008.

\bibitem{14}
G.~Ausiello, P.~Crescenzi, V.~Kann, Marchetti-sp, G.~Gambosi, and
A.~M.~Spaccamela, ``Complexity and approximation: Combinatorial
optimization problems and their approximability properties," in \emph{Springer,} 1999.}

\end {thebibliography}

\end{document}